\newcommand{\bic}[1]
{
\cellcolor{lightgray}{\bf #1}
}
\def\tu#1{{\langle #1\rangle}}
\def\noticka#1{}
  \def\X{$\times$}
  \def\O{}
\title{Mining Biclusters of Similar Values\\with Triadic Concept Analysis}
\author{Mehdi Kaytoue\inst{1}, Sergei O. Kuznetsov\inst{2}, Juraj Macko\inst{3},\\ Wagner Meira Jr.\inst{1} and  Amedeo Napoli\inst{4}}
\authorrunning{Kaytoue et al.} 
\institute{
Universidade Fereral de Minas Gerais -- Belo Horizonte -- Brazil\\
\and HSE -- Pokrovskiy Bd. 11 -- 109028 Moscow -- Russia 
\and Palacky University -- 17. listopadu -- 77146 Olomouc -- Czech Republic\\
\and INRIA/LORIA -- Campus Scientifique, B.P. 239 -- Vand{\oe}uvre-l\`{e}s-Nancy -- France
\email{kaytoue@dcc.ufmg.br, kuznetsovs@yandex.ru, juraj.macko@upol.cz, meira@dcc.ufmg.br, napoli@loria.fr}
}
\begin{document}
\maketitle
\begin{abstract}
Biclustering numerical data became a popular data-mining task in the beginning of 2000's, especially for analysing gene expression data. A bicluster reflects a strong association between a subset of objects and a subset of attributes in a numerical object/attribute data-table. So called biclusters of similar values can be thought as maximal sub-tables with close values. Only few methods address a complete, correct and non redundant enumeration of such patterns, which is a well-known intractable problem, while no formal framework exists. In this paper, we introduce important links between biclustering and formal concept analysis. More specifically, we originally show that Triadic Concept Analysis (TCA), provides a nice mathematical framework for biclustering. Interestingly, existing algorithms of TCA, that usually apply on binary data, can be used (directly or with slight modifications) after a preprocessing step for extracting maximal biclusters of similar values.
\keywords{Triadic concept analysis, numerical biclustering, scaling}
\end{abstract}

\section{Introduction}
\label{introduction}

Numerical data biclustering mainly appeared in the beginning of 2000's as a first answer to new challenges raised by biological data analysis, and especially gene expression data analysis~\cite{MadeiraO04}. Starting from an object/attribute numerical data-table (e.g. Table~\ref{tab:data}), the goal is to group together some objects with some attributes according to the values taken by these attributes for these objects~\cite{MadeiraO04}. Accordingly, a bicluster is formally defined as a pair composed of a set of objects and a set of attributes. Such pair can be represented as a rectangle in the numerical table, modulo lines and columns permutations. Table~\ref{tab:data} is a numerical dataset with objects in lines and attributes in columns, while each table entry corresponds to the value taken by the attribute in column for the object in line. Table~\ref{tab:simbic} illustrates bicluster $(\{g_1,g_2,g_3\},\{m_1,m_2,m_3\})$  as a grey rectangle.

There are several types of biclusters in the literature (see~\cite{MadeiraO04} for a  survey), depending on the relation between the values taken by their attributes for their objects. The most simple case can be understood as rectangles of equal values: a bicluster corresponds to a set of objects whose values taken by a same set of attributes are exactly the same, e.g. $(\{g_1,g_2,g_3\},\{m_5\})$. Constant biclusters  only appear in idyllic situations: generally numerical data are noisy. Accordingly, a straightforward generalization of such biclusters lies in so called biclusters of similar values: they are represented by rectangles with almost identical, say similar, values~\cite{MadeiraO04,BessonRRB06,KaytoueKN11}. Table~\ref{tab:simbic} illustrates a bicluster  of similar values $(\{g_1,g_2,g_3\},\{m_1,m_2,m_3\})$ where two values are said to be similar if their difference is no more than $1$. Moreover, this bicluster is maximal: neither an object nor an attribute can be added without violating the similarity condition.

Only few methods address a complete, correct and non redundant enumeration of such patterns~\cite{BessonRRB06,KaytoueKN11}, which is a well-known intractable problem~\cite{MadeiraO04}, while no formal framework exists. In this paper, we show that Formal Concept Analysis (FCA)~\cite{GanterW99}, and especially Triadic Concept Analysis (TCA)~\cite{LehmannW95} provides a suitable and well defined framework for this task: Basically, an object has an attribute under a condition (a value). After a simple scaling procedure (turning original data into binary), a bicluster is represented as a triadic concept, composed of a set of objects, a set of attributes (both characterizing the corresponding ``rectangle'') and a set of values. All sets are maximal thanks to existing concept forming derivation operators of TCA. This comes with several advantages:
\vspace{-0.1cm}
\begin{itemize}
\item Two values $w_1,w_2$ of the original data are said to be similar iff their difference does not exceed a given parameter $\theta$. In this case, we write $w_1 \simeq_\theta  w_2 \iff \vert w_1 - w_2 \vert \leq \theta$. Otherwise, we write $w_1 \not \simeq_\theta  w_2$. The trilattice produced with TCA after scaling gives all maximal biclusters of similar values for any $\theta$ ordered w.r.t. similarity of their values.

\item The well known notion of \textit{frequency} takes a semantics w.r.t. similarity of values. For example, let $(A,B,C)$ be a triconcept, where $A$ is a set of objects, $B$ a set of attributes, and $C$ a set of similar values. Assume $(A,B)$ to be the corresponding bicluster. The higher $|C|$, the more similar are the values of the bicluster. If all $|A|$, $|B|$, and $|C|$ are high we obtain a bicluster represented as a large rectangle of close values.

\item Existing algorithms from TCA~\cite{JaschkeHSGS06} and $n$-ary closed set mining~\cite{CerfBRB09} can be used directly after scaling. We also provide a new algorithm to compute biclusters maximal only for a given  $\theta$ (see algorithm {\sc TriMax} later on).


\item Both scaling procedure and algorithm {\sc TriMax} computations can be directly distributed to several computing cores.

\item The method can be  adapted to $n$-ary numerical datasets. For example, with $n=3$, a $n$-cluster would be a maximal $3D$-box of similar values. It can be applied to 3D gene expression data, monitoring the behaviour of genes in different samples over time. It follows that
mining $n$-dimensional clusters can be achieved with $n+1$-adic concept analysis.
\end{itemize}

The paper is organized as follows. Firstly, preliminaries regarding TCA are presented in Section~\ref{tca}. Then Section~\ref{problem} formally states the problem. It is followed by the description of our two methods, respectively in Section~\ref{method1} and~\ref{method2}. The first shows how TCA can help characterizing all maximal biclusters for any $\theta$, while the second restricts the problem to a user-given $\theta$. This is followed by experiments on the proposed approaches. Finally, the paper ends with a discussion and perspectives of further research.

\vspace{-0.7cm}
\begin{table}
\begin{minipage}{0.5\textwidth}\centering
\caption{A numerical dataset}
\begin{scriptsize}
\begin{tabular}{c||ccccc}
  & $m_1$ & $m_2$ & $m_3$ & $m_4$ & $m_5$ \\
\hline \hline $g_1$ & 1 & 2 & 2 & 1 & 6 \\
$g_2$ & 2 & 1 & 1 & 0 & 6 \\
$g_3$ & 2 & 2 & 1 & 7 & 6 \\
$g_4$ & 8 & 9 & 2 & 6 & 7 \\
\end{tabular}
\end{scriptsize}
\label{tab:data}
\end{minipage}
\hfill%
\begin{minipage}{0.49\textwidth}\centering
\caption{A bicluster of similar values}
\begin{scriptsize}
\begin{tabular}{c||ccccc}
  & $m_1$ & $m_2$ & $m_3$ & $m_4$ & $m_5$ \\
\hline \hline $g_1$ & \bic{1} & \bic{2} & \bic{2} & 1 & 6 \\
$g_2$ & \bic{2} & \bic{1} & \bic{1} & 0 & 6 \\
$g_3$ & \bic{2} & \bic{2} & \bic{1} & 7 & 6 \\
$g_4$ & 8 & 9 & 2 & 6 & 7 \\
\end{tabular}
\end{scriptsize}
\label{tab:simbic}
\end{minipage}%
\end{table}

\vspace{-0.8cm}


\section{Triadic Concept Analysis}
\label{tca}
We assume that the reader is familiar with basic notions of Formal Concept Analysis~\cite{GanterW99}.
Lehmann and Wille introduced Triadic Concept Analysis (TCA~\cite{LehmannW95}). Data are represented by a  triadic context, given by $(G,M,B,Y)$. $G$, $M$, and $B$ are respectively called sets of objects, attributes and conditions, and  $Y \subseteq G \times M \times B$. The fact $(g,m,b) \in Y$ is interpreted as the statement ``Object $g$ has the attribute $m$ under condition $b$''.

A (triadic) concept of $(G,M,B,Y)$ is a triple $(A_1,A_2,A_3)$ with $A_1 \subseteq G$, $A_2 \subseteq M$ and $A_3 \subseteq B$ satisfying the two following statements: (i) $A_1 \times A_2  \times A_3 \subseteq Y$, $X_1 \times X_2  \times X_3 \subseteq Y$ and (ii) $A_1 \subseteq X_1$, $A_2 \subseteq X_2$ and $A_3 \subseteq X_3$ implies $A_1=X_1$, $A_2=X_2$ and $A_3=X_3$. If $(G,M,B,Y)$ is represented by a three dimensional table, (i) means that a concept stands for a  3-dimensional rectangle full of crosses while $(ii)$ characterises component-wise maximality of concepts. For a triadic concept $(A_1,A_2,A_3)$, $A_1$ is called the extent, $A_2$ the intent and $A_3$ the modus.

To describe the derivation operators, it is convenient to alternatively represent a triadic context as $(K_1,K_2,K_3,Y)$. Then, for $\{i,j,k\}=\{1,2,3\}$, $j < k$, $X \subseteq K_i$ and $Z \subseteq K_j \times K_k$, $(i)$-derivation operators are defined by: \\
\mbox{~~~~~~~~~~}$\Phi: X \rightarrow X^{(i)} :  \{ (a_j,a_k) \in K_j \times K_k~|~(a_i,a_j,a_k) \in Y  \mbox{ for all } a_i \in X \} $  \\
\mbox{~~~~~~~~~~}$\Phi^{'}: Z \rightarrow Z^{(i)}:  \{  a_i \in K_i ~| ~(a_i,a_j,a_k) \in Y  \mbox{ for all } (a_j,a_k) \in Z \} $ 

This definition leads to derivation operator $ \mathbf{K}^{(3)} $ and dyadic context $ \mathbf{K}^{(3)} =\tu{K_3, K_1 \times K_2, Y^{(3)}} $. Further derivation operators are defined as follows: for  $\{i,j,k\}=\{1,2,3\}$, $X_i \subseteq K_i$, $X_j \subseteq K_j$ and $A_k \subseteq K_k$, the $(i,j,A_k)$-derivation operators are defined by:\\ 
\mbox{~~~~~~}$\Psi: X_i \rightarrow X^{(i,j,A_k)}_i :  \{ a_j \in K_j~ |  ~ (a_i,a_j,a_k) \in Y  \mbox{ for all } (a_i,a_k) \in X_i \times A_k \} $  \\
\mbox{~~~~~~}$\Psi^{'}: X_j \rightarrow X^{(i,j,A_k)}_j :  \{ a_i \in K_i ~|~(a_i,a_j,a_k) \in Y  \mbox{ for all } (a_j,a_k) \in X_j \times A_k \} $

Operators $\Phi$ and $\Phi^{'}$ will be called outer operators, pair of both operators outer closure and dyadic operators $\Psi$ and $\Psi^{'}$ inner operators or inner closure when pair of both is used.  Derivation operators of dyadic context are defined by $ \mathbf{K}^{ij}_{A_k} =\tu{K_i,K_j,Y^{ij}_{A_k}} $, where $(a_i,a_j)\in Y^{ij}_{A_k}$ iff $ a_i,a_j,a_k \mbox{ are related by } Y  \mbox{ for all } a_k \in A_k $.
 
From a computational point of view, \cite{JaschkeHSGS06} developed  the algorithm {\sc Trias} for extracting frequent triadic concepts, i.e. whose extent, intent and modus cardinalities are higher than user-defined thresholds (see also \cite{JiTT06}). Cerf et al. presented a more efficient algorithm called {\sc Data-peeler} able to handle $n$-ary relations~\cite{CerfBRB09} while formal definitions lie in so called Polyadic Concept Analysis~\cite{Voutsadakis02}.

\section{Notations and problem settings}
\label{problem}

A numerical dataset is realized by a many-valued context~\cite{GanterW99} and we define accordingly (maximal) biclusters of similar values.

\begin{definition}[Many-valued context] Let $G$ be a set of objects, $M$ be a set of attributes, $W$ be the set of attribute values and $I$ be a ternary relation defined on the Cartesian product $G \times M \times W$. The fact $(g,m,w) \in I$, also written $m(g)=w$, means that ``Attribute $m$ takes the value $w$ for the object $g$''. The tuple $(G,M,W,I)$ is called many-valued context, or simply numerical dataset in this paper.\end{definition} 
\begin{example}
Table~\ref{tab:data} is a numerical dataset, or many-valued context, with objects $G = \{g_1,g_2,g_3,g_4\}$, attributes $M = \{m_1,m_2,m_3,m_4,m_5\}$, $W = \{0,1,2,6,7,8,9\}$ and for example $m_5(g_2) = 6$.
\end{example}

\begin{definition}[Bicluster] 
In a numerical dataset $(G,M,W,I)$, a bicluster is a tuple $(A,B)$ with $A  \subseteq G$ and $B \subseteq M$.
\end{definition} 

\begin{definition}[Similarity relation and bicluster of similar values] 
Let $w_1, w_2 \in W$ be two attribute values and $\theta \in \mathbb{N}$ be a user-defined parameter, called \textit{similarity parameter}. $w_1$ and $w_2$ are said to be similar iff $\vert w_1 - w_2 \vert \leq \theta$ and we note $w_1 \simeq_\theta w_2$.
$(A,B)$  is bicluster of similar values if $m(g) \simeq_{\theta} n(h)$ for all $g,h \in A$ and for all  $m,n \in B$. 
\end{definition} 

\begin{definition}[Maximal bicluster of similar values] 
A bicluster of similar values  $(A,B)$  is maximal if adding either an object in $A$ or an attribute in $B$ does not result in a bicluster of similar values.
\end{definition}

 \begin{example}[From Table~\ref{tab:data}]
$(\{g_1,g_4\}, \{m_2, m_4\})$ is a bicluster. $(\{g_1,g_2\}, \{m_2\})$ is a bicluster of similar values with $\theta \geq 1$. However, it is not maximal. With \mbox{$ 1 \leq \theta < 5$},  $(\{g_1,g_2,g_3\}, \{m_1,m_2,m_3\})$ is maximal. Finally, with $\theta = 7$ the bicluster $(\{g_1,g_2,g_3\}, \{m_1,m_2,m_3,m_4,m_5\})$ is maximal. Note that a constant (maximal) bicluster is a (maximal) bicluster of similar values with $\theta =0$.
\end{example}

Thus the problem that we address in this paper is the extraction of all maximal biclusters of similar values from a numerical dataset. We desire the extraction to be complete, correct and non-redundant compared to several existing methods of the literature based on heuristics~\cite{MadeiraO04}.  For that matter, we propose in the next section a first method aiming at extracting biclusters for any similarity parameter $\theta$. This method establishes new links between biclustering and FCA in general, and TCA in particular. Then, the present methodology is adapted to characterize and extract biclusters that are maximal for a given $\theta$ only as usually done in the literature~\cite{BessonRRB06,KaytoueKN11,MadeiraO04}.

\section{Biclusters of similar values in Triadic Concept Analysis}
\label{method1}

Firstly, we consider the problem of generating maximal biclusters for any $\theta$. Starting from a numerical dataset $(G,M,W,I)$, the basic idea lies in building a triadic context $(G,M,T,Y)$ where the two first dimensions remain formal objects and formal attributes, while $W$ is scaled into a third dimension denoted by $T$.  This new dimension $T$ is called the \textit{scale dimension}: intuitively, it gives different ``spaces of values'' that each object-attribute pair $(g,m) \in G \times M$ can take. Once the scale is given, a triadic context is derived from which triadic concepts are characterized.

We use the \textit{interordinal scaling}~\cite{GanterW99} to build the scale dimension. It allows to encode in $2^T$ all possible intervals of values in $W$. This scale allows to derive a triadic context from which any bicluster of similar values can be characterized as a triadic concept. We made more precise these statements and illustrate the whole procedure with examples.

\begin{definition}[Interordinal Scaling]
A scale is a binary relation $J \subseteq W \times T$ associating original elements from the set of values $W$ to their derived elements in $T$. In the case of interordinal scaling, $T = \{[min(W),w], \forall w \in W\} \cup \{[w,max(W)], \forall w \in W\} $. Then $(w,t) \in J $ iff $w \in t$.
\end{definition} 
 
\begin{example}
Table~\ref{tab:scale2} gives the tabular representation of the interordinal scale for Table~\ref{tab:data}. Intuitively, each line describes a single value, while dyadic concepts represent all possible intervals over $W$. An example of dyadic concept in this table is given by $(\{6,7,8\},\{t_6,t_7,t_8,t_9,t_{10}\})$, rewritten as $(\{6,7,8\},\{[6,8]\})$ since $\{t_6,t_7,t_8,t_9,t_{10}\}$ represents the interval $[0,8] \cap [0,9] \cap [1,9] \cap [2,9] \cap [6,9] = [6,8]$.
\vspace{-0.5cm}
\begin{table}
  \def\X{$\times$}
  \def\O{}
  \centering
  \renewcommand{\arraystretch}{1.1}
  \def\rr#1{\rotatebox{90}{#1}}

\begin{scriptsize}
\begin{tabular}{|r|ccccccccccccc|}
\hline
$J$ & \rotatebox{90}{$t_1 = [0,0]$~} & \rotatebox{90}{$t_2= [0,1]$~} & \rotatebox{90}{$t_3= [0,2]$~}& \rotatebox{90}{$t_4= [0,6]$~} & \rotatebox{90}{$t_5= [0,7]$~} & \rotatebox{90}{$t_6= [0,8]$~} & \rotatebox{90}{$t_{7}= [0,9]$~} & \rotatebox{90}{$t_{8}= [1,9]$~} & \rotatebox{90}{$t_{9}= [2,9]$~} & \rotatebox{90}{$t_{10}= [6,9]$~} & \rotatebox{90}{$t_{11}= [7,9]$~} & \rotatebox{90}{$t_{12}= [8,9]$~}  &\rotatebox{90}{$t_{13}= [9,9]$~} \\
\hline
\hline
\multicolumn{1}{|c|}{$0$} & \X     & \X     & \X       & \X     & \X     & \X     & \X     & \O     & \O      & \O     & \O     & \O     & \O \\

\multicolumn{1}{|c|}{$1$} & \O     & \X     & \X     & \X        & \X     & \X     & \X     & \X     & \O     & \O     & \O     & \O     & \O    \\

\multicolumn{1}{|c|}{$2$} & \O     & \O     & \X     & \X     & \X        & \X     & \X     & \X     & \X     & \O     & \O     & \O     & \O    \\
\multicolumn{1}{|c|}{$6$} & \O     & \O     & \O     & \X     & \X     & \X     & \X     & \X     & \X      & \X     & \O     & \O     & \O \\

\multicolumn{1}{|c|}{$7$} & \O     & \O     & \O     & \O     & \X     & \X & \X     & \X     & \X         & \X     & \X     & \O     & \O \\

\multicolumn{1}{|c|}{$8$} & \O     & \O     & \O     & \O     & \O     & \X     & \X     & \X     & \X      & \X     & \X     & \X     & \O \\

\multicolumn{1}{|c|}{$9$} & \O     & \O     & \O     & \O     & \O     & \O     & \X     & \X     & \X       & \X     & \X     & \X     & \X \\
\hline
\end{tabular}%
\end{scriptsize}

\vspace{0.3cm}
   \caption{Interordinal scale of the set of attribute values $W$.}
   \label{tab:scale2}
\end{table}
\vspace{-1.1cm}
\end{example}
 
Once the scale is defined, we can derive the triadic context w.r.t. this scale. 

\begin{definition}[Triadic scaled context] 
Let $Y$ be ternary relation $Y \subseteq G \times M \times T$. Then $(g,m,t) \in Y$ iff  $(m(g) , t) \in J$, or simply $m(g) \in t$.
We call the tuple $(G,M,T,Y)$ the triadic scaled context of the numerical dataset $(G,M,W,I)$.
\end{definition} 

\begin{example}
The object-attribute pair $(g_1,m_1)$ taking value  $m_1(g_1) = 1$ is scaled into triples $(g_1,m_1,t) \in Y$ where $t$ takes any interval in  $\{[0,1],[0,2], [0,6], [0,7]$, $ [0,8],[0,9], [1,9]\}$. The intersection of intervals in this set is the original value itself, i.e. $m_1(g_1) = 1$, a basic property of interordinal scaling. As a result, Table~\ref{tab:fc3d} illustrates the whole scaled triadic context derived from the numerical dataset given in Table~\ref{tab:data} using interordinal scale. The very first cross ($\times$) in this table (upper left) represents the tuple $(g_2,m_4,t_1)$, meaning that $m_4(g_2) \in [0,0]$.
\end{example}

We present now our first main result: there is a one-to-one correspondence between (i) the set of maximal biclusters of similar values in a given numerical dataset for any similarity parameter $\theta$ and (ii) the set of all triadic concepts in the triadic context derived with interordinal scaling. 

\begin{proposition}
Tuple $ \tu{A,B,U} $, where $A \subseteq G$, $B \subseteq G$ and $U \subseteq T$ is triadic concept iff $(A,B)$ is a maximal bicluster of similar values for some $\theta \geq 0$.
\end{proposition}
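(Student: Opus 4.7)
The plan is to prove the biconditional by building each direction around a single structural observation about interordinal scaling: for any nonempty $S \subseteq W$, the intersection of all intervals $t \in T$ with $S \subseteq t$ equals the tight interval $[\min S, \max S]$. This fact bridges the ``third component'' of a triadic concept to the numerical range governing bicluster similarity. For the ($\Rightarrow$) direction, starting from a triadic concept $\langle A, B, U \rangle$, the inclusion $A \times B \times U \subseteq Y$ forces every value $m(g)$ with $g \in A, m \in B$ to lie in every $t \in U$, and outer maximality of $U$ forces $U$ to equal $\{t \in T : W_{A,B} \subseteq t\}$, where $W_{A,B} := \{m(g) : g \in A, m \in B\}$. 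By the observation above, $\bigcap_{t \in U} t = [a, b]$ with $a := \min W_{A,B}$ and $b := \max W_{A,B}$. Setting $\theta := b - a$ makes $(A,B)$ a bicluster of similar values for this $\theta$; to see maximality, note that appending any $g'$ to $A$ while preserving similarity at $\theta$ forces $m(g') \in [a, b]$ for all $m \in B$ (since $a, b \in W_{A,B}$ and differences must stay within $b-a$), which places $m(g')$ in every $t \in U$, contradicting outer maximality of the triadic concept. Enlarging $B$ is symmetric.

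For the ($\Leftarrow$) direction, suppose $(A,B)$ is a maximal bicluster of similar values for some parameter $\theta$, and set $U := \{t \in T : W_{A,B} \subseteq t\}$. The inclusion $A \times B \times U \subseteq Y$ is immediate from the definition of $Y$. For component-wise maximality, assume $X_1 \supseteq A$, $X_2 \supseteq B$, $X_3 \supseteq U$ with $X_1 \times X_2 \times X_3 \subseteq Y$: any $g' \in X_1 \setminus A$ satisfies $m(g') \in \bigcap U = [a, b]$ for all $m \in B$, so $(A \cup \{g'\}, B)$ remains a bicluster of similar values for $\theta$ (its range stays within $b - a \leq \theta$), contradicting maximality of $(A,B)$. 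The case $X_2 \supsetneq B$ is symmetric, and any $t' \in X_3 \setminus U$ would satisfy $W_{A,B} \subseteq t'$, placing $t'$ back in $U$ by definition.

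The main obstacle, as I see it, is the mismatch between the similarity parameters naturally arising on the two sides: the forward direction must fabricate a tight $\theta = b - a$ from $U$ and establish maximality at \emph{that specific} $\theta$, whereas the converse starts from a user-given $\theta$ possibly much larger than the actual range $b - a$. What makes both arguments close cleanly is the asymmetry that every candidate enlargement is immediately forced inside $[a, b]$—the tightest enclosing interval of the current values—which always sits strictly inside any admissible tolerance $\theta$. The rest of the proof is essentially bookkeeping around the derivation operators $\Phi, \Phi'$ specialised to intervals.
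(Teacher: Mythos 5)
Your proof is correct, and it takes a genuinely different route from the paper's. The paper's appendix proof re-indexes the interordinal scale over the full integer range $[\min W,\max W]$, introduces the ``cluster base'' $E_{\theta w}=[t_{w+\theta-r+1};t_{w-r+1+s}]$, and first proves an auxiliary proposition showing that $m(g)\simeq_\theta n(h)$ holds iff both cells belong to the dyadic context sliced by $E_{\theta b}$, together with the corollary $w_c-w_b\leq\theta \iff E_b\cap E_c\supseteq E_{\theta b}$; the main proposition is then argued through this index arithmetic, yielding as a by-product the formula $\theta=s-|U|+1$. You instead isolate one structural lemma about interordinal scaling --- that $\bigcap\{t\in T: S\subseteq t\}=[\min S,\max S]$ for nonempty $S\subseteq W$ --- and push both directions through it: outer maximality pins $U$ down to exactly the scale intervals containing $W_{A,B}$, their intersection is the tight range $[a,b]$, and the forced containment $m(g')\in[a,b]$ of any candidate enlargement (obtained from $|m(g')-a|\leq b-a$ and $|m(g')-b|\leq b-a$) drives both contradictions. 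Your argument is more elementary and self-contained, avoids the paper's implicit replacement of $W$ by a full integer range, and makes the choice $\theta=b-a$ in the forward direction explicit, which the paper leaves somewhat buried. What the paper's machinery buys is the explicit cardinality relation $\theta=s-|U|+1$ between the modus and the similarity parameter (used elsewhere to interpret modus size as value closeness); your proof does not recover that formula, but it is not part of the statement as given. The only points worth tightening are the degenerate case $A=\emptyset$ or $B=\emptyset$ (where $W_{A,B}$ and hence $[a,b]$ are undefined), which the paper also ignores, and an explicit remark that $U\neq\emptyset$ since $[\min W,\max W]\in T$ always contains $W_{A,B}$.
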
 
\begin{proof}
We leave the proof in the Appendix of the paper since we need to introduce notations and propositions not necessary in the rest of the paper.
\end{proof}

\begin{example}
For example, $( \{ g_1, g_2,g_3  \}  , \{ m_1, m_2, m_3  \}    ,  \{ t_3, t_4, t_5, t_6, t_7 ,t_{8}  \}  )$ is a triadic concept from the context depicted in Table~\ref{tab:fc3d}. It corresponds to the maximal bicluster  $(\{ g_1, g_2,g_3  \}  , \{ m_1, m_2, m_3  \})$ with $\theta = 1$. $\theta = 1$ since $\{ t_3, t_4, t_5, t_6, t_7 ,t_{8}  \} $ is maximal (it is a modus), it corresponds to interval $[1,2]$ and naturally $2 - 1 = 1$ is the length of this interval.
\end{example} 

\vspace{-0.8cm}
\begin{table}
  \def\X{$\times$}
  \def\O{}
  \centering
  \renewcommand{\arraystretch}{1.1}
  \def\rr#1{\rotatebox{90}{#1}}

\scalebox{0.95}{
\begin{scriptsize}
\begin{tabular}{|r|c|c|c|c|c|c|c|c|c|c|c|c|c|c|c|c|c|c|c|c|c|c|c|c|c|}
\cline{2-26}\multicolumn{1}{r|}{} & \multicolumn{5}{c|}{$t_1 = [0,0]$}                & \multicolumn{5}{c|}{$t_2  = [0,1]$}                & \multicolumn{5}{c|}{$t_3  = [0,2]$}                & \multicolumn{5}{c|}{$t_4 = [0,6]$}                & \multicolumn{5}{c|}{$t_5 = [0,7]$}                 \\
\cline{2-26}\multicolumn{1}{r|}{} & \multicolumn{1}{c}{$m_1$} & \multicolumn{1}{c}{$m_2$} & \multicolumn{1}{c}{$m_3$} & \multicolumn{1}{c}{$m_4$} & $m_5$     & \multicolumn{1}{c}{$m_1$} & \multicolumn{1}{c}{$m_2$} & \multicolumn{1}{c}{$m_3$} & \multicolumn{1}{c}{$m_4$} & $m_5$     & \multicolumn{1}{c}{$m_1$} & \multicolumn{1}{c}{$m_2$} & \multicolumn{1}{c}{$m_3$} & \multicolumn{1}{c}{$m_4$} & $m_5$     & \multicolumn{1}{c}{$m_1$} & \multicolumn{1}{c}{$m_2$} & \multicolumn{1}{c}{$m_3$} & \multicolumn{1}{c}{$m_4$} & $m_5$     & \multicolumn{1}{c}{$m_1$} & \multicolumn{1}{c}{$m_2$} & \multicolumn{1}{c}{$m_3$} & \multicolumn{1}{c}{$m_4$} & $m_5$  \\
\hline
$g_1$    & \multicolumn{1}{c}{\O} & \multicolumn{1}{c}{\O} & \multicolumn{1}{c}{\O} & \multicolumn{1}{c}{\O}  & \O     & \multicolumn{1}{c}{\X} & \multicolumn{1}{c}{\O} & \multicolumn{1}{c}{\O} & \multicolumn{1}{c}{\X}  & \O     & \multicolumn{1}{c}{\X} & \multicolumn{1}{c}{\X} & \multicolumn{1}{c}{\X} & \multicolumn{1}{c}{\X}  & \O     & \multicolumn{1}{c}{\X} & \multicolumn{1}{c}{\X} & \multicolumn{1}{c}{\X} & \multicolumn{1}{c}{\X}  & \X     & \multicolumn{1}{c}{\X} & \multicolumn{1}{c}{\X} & \multicolumn{1}{c}{\X} & \multicolumn{1}{c}{\X}  & \X  \\
$g_2$    & \multicolumn{1}{c}{\O} & \multicolumn{1}{c}{\O} & \multicolumn{1}{c}{\O} & \multicolumn{1}{c}{\X}  & \O     & \multicolumn{1}{c}{\O} & \multicolumn{1}{c}{\X} & \multicolumn{1}{c}{\X} & \multicolumn{1}{c}{\X}  & \O     & \multicolumn{1}{c}{\X} & \multicolumn{1}{c}{\X} & \multicolumn{1}{c}{\X} & \multicolumn{1}{c}{\X}  & \O     & \multicolumn{1}{c}{\X} & \multicolumn{1}{c}{\X} & \multicolumn{1}{c}{\X} & \multicolumn{1}{c}{\X}  & \X     & \multicolumn{1}{c}{\X} & \multicolumn{1}{c}{\X} & \multicolumn{1}{c}{\X} & \multicolumn{1}{c}{\X}  & \X \\
$g_3$    & \multicolumn{1}{c}{\O} & \multicolumn{1}{c}{\O} & \multicolumn{1}{c}{\O} & \multicolumn{1}{c}{\O}  & \O     & \multicolumn{1}{c}{\O} & \multicolumn{1}{c}{\O} & \multicolumn{1}{c}{\X} & \multicolumn{1}{c}{\O}  & \O     & \multicolumn{1}{c}{\X} & \multicolumn{1}{c}{\X} & \multicolumn{1}{c}{\X} & \multicolumn{1}{c}{\O}  & \O     & \multicolumn{1}{c}{\X} & \multicolumn{1}{c}{\X} & \multicolumn{1}{c}{\X} & \multicolumn{1}{c}{\O}  & \X     & \multicolumn{1}{c}{\X} & \multicolumn{1}{c}{\X} & \multicolumn{1}{c}{\X} & \multicolumn{1}{c}{\X}  & \X \\
$g_4$    & \multicolumn{1}{c}{\O} & \multicolumn{1}{c}{\O} & \multicolumn{1}{c}{\O} & \multicolumn{1}{c}{\O}  & \O     & \multicolumn{1}{c}{\O} & \multicolumn{1}{c}{\O} & \multicolumn{1}{c}{\O} & \multicolumn{1}{c}{\O}  & \O     & \multicolumn{1}{c}{\O} & \multicolumn{1}{c}{\O} & \multicolumn{1}{c}{\X} & \multicolumn{1}{c}{\O}  & \O     & \multicolumn{1}{c}{\O} & \multicolumn{1}{c}{\O} & \multicolumn{1}{c}{\X} & \multicolumn{1}{c}{\X}  & \O     & \multicolumn{1}{c}{\O} & \multicolumn{1}{c}{\O} & \multicolumn{1}{c}{\X} & \multicolumn{1}{c}{\X}  & \X  \\
\hline
\multicolumn{1}{r}{} & \multicolumn{1}{c}{} & \multicolumn{1}{c}{} & \multicolumn{1}{c}{} & \multicolumn{1}{c}{} & \multicolumn{1}{c}{} & \multicolumn{1}{c}{} & \multicolumn{1}{c}{} & \multicolumn{1}{c}{} & \multicolumn{1}{c}{} & \multicolumn{1}{c}{} & \multicolumn{1}{c}{} & \multicolumn{1}{c}{} & \multicolumn{1}{c}{} & \multicolumn{1}{c}{} & \multicolumn{1}{c}{} & \multicolumn{1}{c}{} & \multicolumn{1}{c}{} & \multicolumn{1}{c}{} & \multicolumn{1}{c}{} & \multicolumn{1}{c}{} & \multicolumn{1}{c}{} & \multicolumn{1}{c}{} & \multicolumn{1}{c}{} & \multicolumn{1}{c}{} & \multicolumn{1}{c}{}  \\
\cline{2-26}\multicolumn{1}{r|}{} & \multicolumn{5}{c|}{$t_6 = [0,8]$}                & \multicolumn{5}{c|}{$t_7 = [0,9]$}                & \multicolumn{5}{c|}{$t_8 = [1,9]$}                & \multicolumn{5}{c|}{$t_9 = [2,9]$}                & \multicolumn{5}{c|}{$t_{10} = [6,9]$}                 \\
\cline{2-26}\multicolumn{1}{r|}{} & \multicolumn{1}{c}{$m_1$} & \multicolumn{1}{c}{$m_2$} & \multicolumn{1}{c}{$m_3$} & \multicolumn{1}{c}{$m_4$} & $m_5$     & \multicolumn{1}{c}{$m_1$} & \multicolumn{1}{c}{$m_2$} & \multicolumn{1}{c}{$m_3$} & \multicolumn{1}{c}{$m_4$} & $m_5$     & \multicolumn{1}{c}{$m_1$} & \multicolumn{1}{c}{$m_2$} & \multicolumn{1}{c}{$m_3$} & \multicolumn{1}{c}{$m_4$} & $m_5$     & \multicolumn{1}{c}{$m_1$} & \multicolumn{1}{c}{$m_2$} & \multicolumn{1}{c}{$m_3$} & \multicolumn{1}{c}{$m_4$} & $m_5$     & \multicolumn{1}{c}{$m_1$} & \multicolumn{1}{c}{$m_2$} & \multicolumn{1}{c}{$m_3$} & \multicolumn{1}{c}{$m_4$} & $m_5$  \\
\hline
$g_1$    & \multicolumn{1}{c}{\X} & \multicolumn{1}{c}{\X} & \multicolumn{1}{c}{\X} & \multicolumn{1}{c}{\X}  & \X     & \multicolumn{1}{c}{\X} & \multicolumn{1}{c}{\X} & \multicolumn{1}{c}{\X} & \multicolumn{1}{c}{\X} & \X     & \multicolumn{1}{c}{\X} & \multicolumn{1}{c}{\X} & \multicolumn{1}{c}{\X} & \multicolumn{1}{c}{\X} & \X     & \multicolumn{1}{c}{\O} & \multicolumn{1}{c}{\X} & \multicolumn{1}{c}{\X} & \multicolumn{1}{c}{\O} & \X     & \multicolumn{1}{c}{\O} & \multicolumn{1}{c}{\O} & \multicolumn{1}{c}{\O} & \multicolumn{1}{c}{\O} & \X  \\
$g_2$    & \multicolumn{1}{c}{\X} & \multicolumn{1}{c}{\X} & \multicolumn{1}{c}{\X} & \multicolumn{1}{c}{\X}  & \X     & \multicolumn{1}{c}{\X} & \multicolumn{1}{c}{\X} & \multicolumn{1}{c}{\X} & \multicolumn{1}{c}{\X} & \X     & \multicolumn{1}{c}{\X} & \multicolumn{1}{c}{\X} & \multicolumn{1}{c}{\X} & \multicolumn{1}{c}{\O} & \X     & \multicolumn{1}{c}{\X} & \multicolumn{1}{c}{\O} & \multicolumn{1}{c}{\O} & \multicolumn{1}{c}{\O} & \X     & \multicolumn{1}{c}{\O} & \multicolumn{1}{c}{\O} & \multicolumn{1}{c}{\O} & \multicolumn{1}{c}{\O} & \X \\
$g_3$    & \multicolumn{1}{c}{\X} & \multicolumn{1}{c}{\X} & \multicolumn{1}{c}{\X} & \multicolumn{1}{c}{\X}  & \X     & \multicolumn{1}{c}{\X} & \multicolumn{1}{c}{\X} & \multicolumn{1}{c}{\X} & \multicolumn{1}{c}{\X} & \X     & \multicolumn{1}{c}{\X} & \multicolumn{1}{c}{\X} & \multicolumn{1}{c}{\X} & \multicolumn{1}{c}{\X} & \X     & \multicolumn{1}{c}{\X} & \multicolumn{1}{c}{\X} & \multicolumn{1}{c}{\X} & \multicolumn{1}{c}{\X} & \X     & \multicolumn{1}{c}{\O} & \multicolumn{1}{c}{\O} & \multicolumn{1}{c}{\O} & \multicolumn{1}{c}{\X} & \X \\
$g_4$    & \multicolumn{1}{c}{\X} & \multicolumn{1}{c}{\O} & \multicolumn{1}{c}{\X} & \multicolumn{1}{c}{\X}  & \X     & \multicolumn{1}{c}{\X} & \multicolumn{1}{c}{\X} & \multicolumn{1}{c}{\X} & \multicolumn{1}{c}{\X}  & \X     & \multicolumn{1}{c}{\X} & \multicolumn{1}{c}{\X} & \multicolumn{1}{c}{\X} & \multicolumn{1}{c}{\X} & \X     & \multicolumn{1}{c}{\X} & \multicolumn{1}{c}{\X} & \multicolumn{1}{c}{\X} & \multicolumn{1}{c}{\X} & \X     & \multicolumn{1}{c}{\X} & \multicolumn{1}{c}{\X} & \multicolumn{1}{c}{\O} & \multicolumn{1}{c}{\X} & \X  \\
\hline
\multicolumn{1}{r}{}  & \multicolumn{1}{c}{} & \multicolumn{1}{c}{} & \multicolumn{1}{c}{} & \multicolumn{1}{c}{} & \multicolumn{1}{c}{} & \multicolumn{1}{c}{} & \multicolumn{1}{c}{} & \multicolumn{1}{c}{} & \multicolumn{1}{c}{} & \multicolumn{1}{c}{} & \multicolumn{1}{c}{} & \multicolumn{1}{c}{} & \multicolumn{1}{c}{} & \multicolumn{1}{c}{} & \multicolumn{1}{c}{} & \multicolumn{1}{c}{} & \multicolumn{1}{c}{} & \multicolumn{1}{c}{} & \multicolumn{1}{c}{} & \multicolumn{1}{c}{}  \\
\multicolumn{1}{r}{} & \multicolumn{1}{c}{} & \multicolumn{1}{c}{} & \multicolumn{1}{c}{} & \multicolumn{1}{c}{} & \multicolumn{1}{c}{} & \multicolumn{1}{c}{} & \multicolumn{1}{c}{} & \multicolumn{1}{c}{} & \multicolumn{1}{c}{} & \multicolumn{1}{c}{} & \multicolumn{1}{c}{} & \multicolumn{1}{c}{} & \multicolumn{1}{c}{} & \multicolumn{1}{c}{} & \multicolumn{1}{c}{} & \multicolumn{1}{c}{} & \multicolumn{1}{c}{} & \multicolumn{1}{c}{} & \multicolumn{1}{c}{} & \multicolumn{1}{c}{} & \multicolumn{1}{c}{} & \multicolumn{1}{c}{} & \multicolumn{1}{c}{} & \multicolumn{1}{c}{} & \multicolumn{1}{c}{}  \\
\cline{2-16}\multicolumn{1}{r|}{}      & \multicolumn{5}{c|}{$t_{11}= [7,9]$}               & \multicolumn{5}{c|}{$t_{12}= [8,9]$}               & \multicolumn{5}{c|}{$t_{13} = [9,9]$}               & \multicolumn{1}{c}{} & \multicolumn{1}{c}{} & \multicolumn{1}{c}{} & \multicolumn{1}{c}{} & \multicolumn{1}{c}{}  \\
\cline{2-16}\multicolumn{1}{r|}{}  & \multicolumn{1}{c}{$m_1$} & \multicolumn{1}{c}{$m_2$} & \multicolumn{1}{c}{$m_3$} & \multicolumn{1}{c}{$m_4$} & $m_5$     & \multicolumn{1}{c}{$m_1$} & \multicolumn{1}{c}{$m_2$} & \multicolumn{1}{c}{$m_3$} & \multicolumn{1}{c}{$m_4$} & $m_5$     & \multicolumn{1}{c}{$m_1$} & \multicolumn{1}{c}{$m_2$} & \multicolumn{1}{c}{$m_3$} & \multicolumn{1}{c}{$m_4$} & $m_5$     & \multicolumn{1}{c}{} & \multicolumn{1}{c}{} & \multicolumn{1}{c}{} & \multicolumn{1}{c}{} & \multicolumn{1}{c}{}  \\
\cline{1-16}$g_1$     & \multicolumn{1}{c}{\O} & \multicolumn{1}{c}{\O} & \multicolumn{1}{c}{\O} & \multicolumn{1}{c}{\O}  & \O     & \multicolumn{1}{c}{\O} & \multicolumn{1}{c}{\O} & \multicolumn{1}{c}{\O} & \multicolumn{1}{c}{\O}  & \O     & \multicolumn{1}{c}{\O} & \multicolumn{1}{c}{\O} & \multicolumn{1}{c}{\O} & \multicolumn{1}{c}{\O}  & \O     & \multicolumn{1}{c}{} & \multicolumn{1}{c}{} & \multicolumn{1}{c}{} & \multicolumn{1}{c}{} & \multicolumn{1}{c}{}  \\
$g_2$    & \multicolumn{1}{c}{\O} & \multicolumn{1}{c}{\O} & \multicolumn{1}{c}{\O} & \multicolumn{1}{c}{\O}  & \O     & \multicolumn{1}{c}{\O} & \multicolumn{1}{c}{\O} & \multicolumn{1}{c}{\O} & \multicolumn{1}{c}{\O}  & \O     & \multicolumn{1}{c}{\O} & \multicolumn{1}{c}{\O} & \multicolumn{1}{c}{\O} & \multicolumn{1}{c}{\O}  & \O     & \multicolumn{1}{c}{} & \multicolumn{1}{c}{} & \multicolumn{1}{c}{} & \multicolumn{1}{c}{} & \multicolumn{1}{c}{} \\
$g_3$    & \multicolumn{1}{c}{\O} & \multicolumn{1}{c}{\O} & \multicolumn{1}{c}{\O} & \multicolumn{1}{c}{\X} &       & \multicolumn{1}{c}{\O} & \multicolumn{1}{c}{\O} & \multicolumn{1}{c}{\O} & \multicolumn{1}{c}{\O}  & \O     & \multicolumn{1}{c}{\O} & \multicolumn{1}{c}{\O} & \multicolumn{1}{c}{\O} & \multicolumn{1}{c}{\O}  & \O     & \multicolumn{1}{c}{} & \multicolumn{1}{c}{} & \multicolumn{1}{c}{} & \multicolumn{1}{c}{} & \multicolumn{1}{c}{} \\
$g_4$  & \multicolumn{1}{c}{\X} & \multicolumn{1}{c}{\X} & \multicolumn{1}{c}{\O} & \multicolumn{1}{c}{\O} & \X     & \multicolumn{1}{c}{\X} & \multicolumn{1}{c}{\X} & \multicolumn{1}{c}{\O} & \multicolumn{1}{c}{\O}  & \O     & \multicolumn{1}{c}{\O} & \multicolumn{1}{c}{\X} & \multicolumn{1}{c}{\O} & \multicolumn{1}{c}{\O}  & \O     & \multicolumn{1}{c}{} & \multicolumn{1}{c}{} & \multicolumn{1}{c}{} & \multicolumn{1}{c}{} & \multicolumn{1}{c}{}  \\
\cline{1-16}\end{tabular}%
\end{scriptsize}
}
\vspace{0.3cm}
   \caption{Triadic scaled context for Table~\ref{tab:data} with interordinal scaling.}
   \label{tab:fc3d}
\end{table}
\vspace{-0.8cm}

Hence we showed that extracting biclusters of similar values for any $\theta$ in a numerical dataset can be achieved by (i) scaling the attribute value dimension and (ii) extracting the triadic concepts in the resulting derived triadic context.

Interestingly, triadic concepts $(A,B,U)$ with the largest sets $A,B$ or $C$ represent large biclusters of close values. Indeed, the larger $|A|$ and $|B|$ the larger the data covering of the corresponding bicluster. Furthermore, the larger $|U|$, the more similar values for bicluster $(A,B)$. Indeed, by the properties of interordinal scaling, the more intervals in $U$, the smaller their interval intersection. Mining so called top-$k$ frequent triadic concepts can accordingly  be achieved with the existing algorithm {\sc Data-Peeler}~\cite{CerfBRB09}.

On another hand, extracting maximal biclusters for all $\theta$ may be neither efficient nor effective with large numerical data: their number tends to be very large and all biclusters are not relevant for a given analysis. Furthermore, both size and density of contexts derived with interordinal scaling are known to be problematic w.r.t algorithmic scalability, see e.g.~\cite{KaytoueKND11}. In existing methods of the literature, $\theta$ is set \textit{a priori}. We show now how to handle this case with slight modifications, our second main result.

\section{Extracting biclusters of similar values for a given $\theta$}
\label{method2}

In this section we consider the problem of extracting maximal biclusters of similar values in TCA for a given $\theta$ only. It comes with slight modifications of the methodology presented in last section. Intuitively, consider the previous scaling applied on a numerical dataset $(G,M,W,I)$. It scales $W$ into dimension $T$ and subsets of $T$ characterize all intervals of values over $W$. To get maximal biclusters for a given $\theta$ only, we should not consider all possible intervals in $W$, but rather all intervals (i) having a range size that is less or equal than $\theta$ to avoid biclusters with non similar values, and (ii) having a range size the closest as possible to $\theta$ to avoid non-maximal biclusters. For example, if we set $\theta =2$, it is probably not interesting to consider interval $[0,8]$ in the scale dimension since $8-0 > \theta$. Similarly, considering the interval $[6,6]$ may not be interesting as well, since a bicluster with all its values equal to $6$ may not be maximal. As introduced in~\cite{CIKM10}, those maximal intervals of similar values used for the scale are called blocks of tolerance over the set of numbers $W$ with respect to the tolerance relation $\simeq_\theta$.

Therefore we firstly recall basics on tolerance relations over a set of numbers. It allows us to define a simpler scaling procedure. The resulting triadic context is then mined with a new TCA algorithm called {\sc TriMax} to extract maximal biclusters of similar values for a given $\theta$.

Blocks of tolerance over $W$ are defined as maximal sets of pairwise similar values from $W$:
\begin{definition}[Tolerance blocks from a set of numbers]
The similarity relation $\simeq_\theta$ is called a tolerance relation, i.e. reflexive, symmetric but not transitive. Given a set $W$ of values, a subset $V \subseteq W$, and a tolerance relation $\simeq_\theta$ over $W$, $V$ is a \emph{block of tolerance} if:\\
$~~~~~~~~~~$(i)  $\forall w_1,w_2 \in V, ~w_1 \simeq_\theta w_2$ (pairwise similarity) \\
$~~~~~~~~~~$(ii) $\forall w_1 \not \in V, \exists  w_2 \in V,~w_1 \not \simeq_\theta w_2$
(maximality).
\end{definition}

From Table~\ref{tab:data} we have $W = \{0,1,2,6,7,8,9\}$. With $\theta = 2$, one has $0 \simeq_2 2$ but $2 \not \simeq_2 6$. Accordingly, one obtains $3$ blocks of tolerance, namely the sets $\{0,1,2\}$, $\{6,7,8\}$ and $\{7,8,9\}$. These three sets can be renamed as the convex hull of their elements on $\mathbb{N}$: respectively, $[0,2]$, $[6,8]$ and $[7,9]$: any number lying between the minimal and the maximal elements (w.r.t. natural number ordering) of a block of tolerance is naturally similar to any other element of the block.

To derive a triadic context from a numerical dataset, we simply use tolerance blocks over $W$ to define the scale dimension.

\begin{definition}[{\sc Trimax} scale relation] 
The scale relation is a binary relation $J \subseteq W \times C$, where $C$ is the set of blocks of tolerance over $W$ renamed as their convex hulls. Then, $(w,c) \in J$ iff $w \in  c$.
\end{definition}

\begin{example}
From Table~\ref{tab:data} we have: $C = \{[0,1], [1,2], [6,7], [7,8], [8,9]\}$ with $\theta = 1$, and $C = \{ [0,2], [6,8], [7,9] \}$ with $\theta = 2$.
\end{example}

Then, we can apply the same context derivation as in previous section: scaling is still based on intervals, but this time it uses tolerance blocks.
\begin{definition}[{\sc TriMax} triadic scaled context] 
Let $Y \subseteq G \times M \times C$ be a ternary relation. Then $(g,m, c) \in Y$ iff  $(m(g),c) \in J$, or simply $m(g) \in c$, where $J$ is the scale relation. $(G,M,C,Y)$ is called the {\sc TriMax} triadic scaled context.
\end{definition}

\begin{example}
Table \ref{tab:fc3dTriMax} is the {\sc Trimax} triadic scaled concept derived from the numerical dataset lying in Table~\ref{tab:data} with $\theta = 1$.
\end{example}

\vspace{-0.8cm}
\begin{table}
  \def\X{$\times$}
  \def\O{}
  \centering
  \renewcommand{\arraystretch}{1.1}
  \def\rr#1{\rotatebox{90}{#1}}
  \scalebox{0.95}{
\begin{scriptsize}
\begin{tabular}{|r|c|c|c|c|c|c|c|c|c|c|c|c|c|c|c|c|c|c|c|c|c|c|c|c|c|}

\cline{2-26}\multicolumn{1}{r|}{} & \multicolumn{5}{c|}{label 1}                & \multicolumn{5}{c|}{label 2}                & \multicolumn{5}{c|}{label 3}                & \multicolumn{5}{c|}{label 4}                & \multicolumn{5}{c|}{label 5}                 \\
\cline{2-26}\multicolumn{1}{r|}{} & \multicolumn{5}{c|}{$[0,1]$}                & \multicolumn{5}{c|}{$[1,2]$}                & \multicolumn{5}{c|}{$[6,7]$}                & \multicolumn{5}{c|}{$[7,8]$}                & \multicolumn{5}{c|}{$[8,9]$}                 \\
\cline{2-26}\multicolumn{1}{r|}{} & \multicolumn{1}{c}{$m_1$} & \multicolumn{1}{c}{$m_2$} & \multicolumn{1}{c}{$m_3$} & \multicolumn{1}{c}{$m_4$} & $m_5$     & \multicolumn{1}{c}{$m_1$} & \multicolumn{1}{c}{$m_2$} & \multicolumn{1}{c}{$m_3$} & \multicolumn{1}{c}{$m_4$} & $m_5$     & \multicolumn{1}{c}{$m_1$} & \multicolumn{1}{c}{$m_2$} & \multicolumn{1}{c}{$m_3$} & \multicolumn{1}{c}{$m_4$} & $m_5$     & \multicolumn{1}{c}{$m_1$} & \multicolumn{1}{c}{$m_2$} & \multicolumn{1}{c}{$m_3$} & \multicolumn{1}{c}{$m_4$} & $m_5$     & \multicolumn{1}{c}{$m_1$} & \multicolumn{1}{c}{$m_2$} & \multicolumn{1}{c}{$m_3$} & \multicolumn{1}{c}{$m_4$} & $m_5$  \\
\hline

$g_1$     
& \multicolumn{1}{c}{\X} & \multicolumn{1}{c}{\O} & \multicolumn{1}{c}{\O} & \multicolumn{1}{c}{\X}  & \O    

& \multicolumn{1}{c}{\X} & \multicolumn{1}{c}{\X} & \multicolumn{1}{c}{\X} & \multicolumn{1}{c}{\X}  & \O    

 & \multicolumn{1}{c}{\O} & \multicolumn{1}{c}{\O} & \multicolumn{1}{c}{\O} & \multicolumn{1}{c}{\O}  & \X    

 & \multicolumn{1}{c}{\O} & \multicolumn{1}{c}{\O} & \multicolumn{1}{c}{\O} & \multicolumn{1}{c}{\O}  & \O  

 & \multicolumn{1}{c}{\O} & \multicolumn{1}{c}{\O} & \multicolumn{1}{c}{\O} & \multicolumn{1}{c}{\O}  & \O  \\

$g_2$      & \multicolumn{1}{c}{\O} & \multicolumn{1}{c}{\X} & \multicolumn{1}{c}{\X} & \multicolumn{1}{c}{\X}  & \O    
& \multicolumn{1}{c}{\X} & \multicolumn{1}{c}{\X} & \multicolumn{1}{c}{\X} & \multicolumn{1}{c}{\O}  & \O   
& \multicolumn{1}{c}{\O} & \multicolumn{1}{c}{\O} & \multicolumn{1}{c}{\O} & \multicolumn{1}{c}{\O}  & \X    
 & \multicolumn{1}{c}{\O} & \multicolumn{1}{c}{\O} & \multicolumn{1}{c}{\O} & \multicolumn{1}{c}{\O}  & \O    
 & \multicolumn{1}{c}{\O} & \multicolumn{1}{c}{\O} & \multicolumn{1}{c}{\O} & \multicolumn{1}{c}{\O}  & \O \\

$g_3$     & \multicolumn{1}{c}{\O} & \multicolumn{1}{c}{\O} & \multicolumn{1}{c}{\X} & \multicolumn{1}{c}{\O}  & \O   
& \multicolumn{1}{c}{\X} & \multicolumn{1}{c}{\X} & \multicolumn{1}{c}{\X} & \multicolumn{1}{c}{\O}  & \O      
& \multicolumn{1}{c}{\O} & \multicolumn{1}{c}{\O} & \multicolumn{1}{c}{\O} & \multicolumn{1}{c}{\X}  & \X    
 & \multicolumn{1}{c}{\O} & \multicolumn{1}{c}{\O} & \multicolumn{1}{c}{\O} & \multicolumn{1}{c}{\X}  & \O     
& \multicolumn{1}{c}{\O} & \multicolumn{1}{c}{\O} & \multicolumn{1}{c}{\O} & \multicolumn{1}{c}{\O}  & \O \\

$g_4$     & \multicolumn{1}{c}{\O} & \multicolumn{1}{c}{\O} & \multicolumn{1}{c}{\O} & \multicolumn{1}{c}{\O}  & \O     
& \multicolumn{1}{c}{\O} & \multicolumn{1}{c}{\O} & \multicolumn{1}{c}{\X} & \multicolumn{1}{c}{\O}  & \O       
& \multicolumn{1}{c}{\O} & \multicolumn{1}{c}{\O} & \multicolumn{1}{c}{\O} & \multicolumn{1}{c}{\X}  & \X    
 & \multicolumn{1}{c}{\X} & \multicolumn{1}{c}{\O} & \multicolumn{1}{c}{\O} & \multicolumn{1}{c}{\O}  & \X    
 & \multicolumn{1}{c}{\X} & \multicolumn{1}{c}{\X} & \multicolumn{1}{c}{\O} & \multicolumn{1}{c}{\O}  & \O  \\
\hline
\end{tabular}%
\end{scriptsize}
}
\vspace{0.3cm}
   \caption{Triadic scaled context using tolerance blocks over $W$ and $\theta=1$.}
   \label{tab:fc3dTriMax}
\end{table}
\vspace{-1cm}

\begin{definition}[Dyadic context associated with a block of tolerance]
Consider a block of tolerance $c \in C$. The dyadic context associated with this block is given by $(G, M, Z)$ where $z \in Z$ denotes all $(g,m) \in G \times M$ such as $ m(g) \in c$.
\end{definition}
\begin{example}
In Table \ref{tab:fc3dTriMax}, each such dyadic context is labelled by its corresponding block of tolerance.
\end{example}

Now, remark that blocks of tolerance over $W$ are totally ordered: let $[v_1, v_2]$ and $[w_1,w_2]$ be two blocks of tolerance, one has $[v_1, v_2] < [w_1,w_2]$ iff $v_1 < w_1$. Hence, associated dyadic contexts are also totally ordered and we use a corresponding indexing set to label them. In  Table \ref{tab:fc3dTriMax}, contexts for blocks $\langle [0,1], [1,2]$, $[6,7]$, $[7,8], [8,9] \rangle$ are respectively labelled $\langle 1,2,3,4,5 \rangle$.

We now present our second main results: The scaled triadic context supports the extraction of maximal biclusters of similar values for a given $\theta$. In this case however, existing algorithms of TCA cannot be applied directly. For example, in Table~\ref{tab:fc3dTriMax}, the triconcept $(\{g_3\},\{m_4\}, \{3,4\})$ corresponds to a bicluster of similar values  which is not maximal. Hence we present hereafter a new TCA algorithm for this task, called {\sc TriMax}. 

The basic idea of {\sc TriMax} relies on the following facts. Firstly, since each dyadic context corresponds to a block of tolerance, we do not need to compute intersections of contexts, such as classically done in TCA. Hence each dyadic context is processed separately. Secondly, a dyadic concept of a dyadic context necessarily represents a bicluster of similar values, but we cannot be sure it is maximal (see previous example). Hence, we need to check if a concept is still a concept in other dyadic contexts, corresponding to other classes of tolerance. This is made precise with the following proposition.

\begin{proposition}
Let  $(A,B,U)$ be a triadic concept from {\sc Trimax} triadic scaled context $(G,M,C,Y)$, such that $U$ is the outer closure of a singleton $\{c\} \subseteq C$. If $|U|=1$, $(A,B)$ is a maximal bicluster of similar values. Otherwise, $(A,B)$ is a maximal bicluster of similar values iff $\nexists y \in [min (U) ; max (U)]$, $y < c$ s.t. $(A,B) \not = \Psi^{'}_y( \Psi_y((A,B)))$, where $\Psi^{'}_y(.)$ and $\Psi_y(.)$ correspond to inner derivation operators associated with $y^{th}$ dyadic context.
\label{trimax-concepts}
\end{proposition}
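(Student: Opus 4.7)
The plan is to exploit the identity $K_U = K_c$ when $U = \{c\}^{(3)(3)}$. Indeed, every $c' \in U$ satisfies $K_c \subseteq K_{c'}$ by definition of the outer closure, so the intersection $\bigcap_{c' \in U} K_{c'}$ collapses to $K_c$. Because $(A,B,U)$ is a triadic concept, $(A,B)$ is then a dyadic concept of the dyadic context $K_c$, which implies $A \times B \subseteq K_c$; since $c$ is a tolerance block, all values $m(g)$ with $g \in A$ and $m \in B$ are pairwise $\theta$-similar. So $(A,B)$ is automatically a bicluster of similar values, and only its maximality needs to be settled.

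The key observation for maximality is that any proper extension $(A',B') \supsetneq (A,B)$ that is still a bicluster of similar values must satisfy $A' \times B' \subseteq K_y$ for some block $y$, and since $A \times B \subseteq A' \times B' \subseteq K_y$ this block $y$ belongs to $\{c' \in C : A \times B \subseteq K_{c'}\}$. The triadic-concept maximality of $U$ in the third dimension ensures this set is exactly $U$. Hence $(A,B)$ is a maximal bicluster of similar values if and only if, for every $y \in U$, $(A,B)$ is already a dyadic concept of $K_y$, i.e.\ $\Psi'_y(\Psi_y((A,B))) = (A,B)$.

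In case (a), $|U|=1$ forces $y = c$, and $(A,B)$ is a concept of $K_c$ by construction, so maximality holds. In case (b), the characterization above reduces to checking $\Psi'_y(\Psi_y((A,B))) = (A,B)$ for every $y \in U \setminus \{c\}$. The statement restricts $y$ to those with $y < c$ in the range $[\min U, \max U]$: this restriction is where I expect the main difficulty. I would approach it by using the nesting $K_c \subseteq K_y$ for $y \in U$ together with the fact that $c$ is distinguished as the \emph{generator} of the outer closure, hence the smallest-context element of $U$. The idea is that any witness pair $(g,m)$ producing a strict enlargement of $(A,B)$ in some $K_y$ with $y > c$ would, through the total ordering and overlapping-interval structure of the tolerance blocks, yield a corresponding witness in some $K_{y'}$ with $y' < c$ and $y' \in U$, making the restricted one-sided check equivalent to the full check over $U \setminus \{c\}$; this is the step I would spend the most effort verifying.
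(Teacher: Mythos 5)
Your central lemma --- $(A,B)$ is a maximal bicluster of similar values iff $(A,B)=\Psi'_y(\Psi_y((A,B)))$ for every $y$ in the modus $U$ --- is correct, and your argument for it (any enlargement that is still a bicluster of similar values has all its values inside some tolerance block, and that block must belong to $U$ by maximality of the modus) is exactly the right one; it is in fact spelled out more carefully than in the paper's own two-sentence proof. However, your opening identity $K_U=K_c$ rests on a false premise. For two distinct tolerance blocks $c=[a,b]$ and $c'=[a',b']$ with $a<a'$ one always has $Z_c\not\subseteq Z_{c'}$: the endpoint $a$ is itself a data value, lies in $c$ but not in $c'$, so any entry equal to $a$ witnesses non-inclusion. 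Hence distinct blocks give incomparable dyadic contexts, the literal outer closure $\{c\}^{(3)(3)}$ is always just $\{c\}$, and your reading would make the ``otherwise'' branch vacuous. The $U$ actually intended is the modus $\Phi'((A,B))$ of a pair $(A,B)$ extracted as a dyadic concept of the $c$-th context by the algorithm. The only thing you need from this step --- that all values of $(A,B)$ lie in the single block $c$ and are therefore pairwise similar --- follows directly from $c\in U$ and $A\times B\times U\subseteq Y$, with no nesting of contexts.

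The step you flagged as the main difficulty is indeed where the proposal breaks, and the repair you sketch goes in the wrong direction: $c$ is not a ``smallest-context'' element (no such nesting exists), and a witness of non-closedness in some $K_y$ with $y>c$ does not transfer to some $y'<c$. Concretely, take $\theta=1$ with $m_1(g_1)=1$, $m_1(g_2)=2$ and all other entries far away: $(\{g_1\},\{m_1\})$ is a dyadic concept of the context for block $[0,1]$ with modus $U=\{[0,1],[1,2]\}$, it is not closed in the context for $[1,2]$ (its closure there is $(\{g_1,g_2\},\{m_1\})$), hence it is not a maximal bicluster, yet with $c=[0,1]$ there is no $y<c$ to detect this. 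The resolution is that the proposition is only ever invoked with $c=\max(U)$: the algorithm's guard $\max(\Phi'((A,B)))=i$ ensures the test runs in the last context of the modus. Since the blocks are totally ordered, the set of blocks containing a fixed set of values is an interval of consecutive indices, so $U=[\min(U);\max(U)]$ and $\{y\in U: y<c\}=U\setminus\{c\}$; closedness in the $c$-th context itself holds by construction, so the one-sided test coincides with your ``all $y\in U$'' criterion. Without the hypothesis $c=\max(U)$ the stated equivalence is false, so that assumption must be made explicit rather than derived from properties of the outer closure.
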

\proof{
When $|U|=1$, $(A,B)$ is a dyadic concept only in one dyadic context corresponding to a block of tolerance. By properties of tolerance blocks, $(A,B)$ is a maximal bicluster. If $|U| \not =1$,  $(A,B)$ is a dyadic concept in $|U|$ dyadic contexts. Since the tolerance block set is totally ordered, it directly implies that modus $U$ is an interval $[min (U) ; max (U)]$. Hence, if  $\exists y \in [min (U) ; max (U)]$ s.t. $(A,B) = \Psi^{'}_y( \Psi_y((A,B)))$ this means that $(A,B)$ is not a maximal bicluster of similar values.
}

\medskip

\textbf{Description of the {\sc TriMax} algorithm.}
{\sc TriMax} starts with scaling initial numerical data into several dyadic contexts, each one standing for a block of tolerance over $W$ with given $\theta$.  The set of all dyadic contexts forms accordingly a triadic context. Then, each dyadic context is mined with any FCA algorithm (or closed itemset mining algorithm), and all formal concepts are extracted. For a given concept $(A,B)$, we compute outer derivation  $\Phi^{'}( (A,B))$, i.e. to obtain  the set of dyadic contexts labels in which the current dyadic concept holds. If it results in a singleton, this means that $(A,B)$ is a concept for the current block of tolerance only, i.e. it is a maximal bicluster of similar values, and it has been, or will never be, generated twice. Otherwise, $(A,B)$ is a concept in other contexts, and can be generated accordingly several times (as much as the number of contexts in which it holds). Then, we only consider $(A,B)$ if we are sure it is the last time it is computed. Finally, we need to check if current concept represents a maximal bicluster, i.e. there should not exist a context from the modus where $(A,B)$ is not a dyadic concept.

\begin{algorithm}
\SetKwData{Left}{left}\SetKwData{This}{this}\SetKwData{Up}{up}
\SetKwFunction{Union}{Union}\SetKwFunction{FindCompress}{FindCompress}
\SetKwInOut{Input}{input}\SetKwInOut{Output}{output}
\Input{Numerical dataset  $(G,M,W,I)$, tolerance parameter $\theta$}
\Output{Maximal biclusters of similar values}
\BlankLine

Let $C = \{ [a_i, b_i ] \}$ be the totally ordered set of all blocks over $W$ for given $\theta$. Indices $i$ form an indexing set.

\ForAll{$[a_i,b_i] \in C$}{
Build context $(G,M, Z_i)$ such that $(g,m) \in Z_i \Leftrightarrow m(g) \in [a_i,b_i]$
}

\ForAll {$(G,M, Z_i)$}{
Use any FCA algorithm to extract all its concepts $(A,B)$\\
\ForAll {dyadic concepts $(A,B)$ in the current context $(G,M, Z_i)$ }
{
\If {$ | \Phi^{'}( (A,B)) | = 1$}
{
print $(A,B)$
}
\ElseIf{$max (\Phi^{'}( (A,B)) = i$}
{
	$x \leftarrow  min (\Phi^{'}( (A,B)) $ \\

	\If {$\nexists y \in [x, i[$ s.t. $(A,B) \not = \Psi^{'}_y( \Psi_y((A,B)))$ }	
	{
	print $(A,B)$
	}
}
}
}
\caption{TriMax}\label{trimax}
\end{algorithm}

\begin{proposition}
{\sc TriMax} outputs a (i) complete, (ii) correct and (iii) non redundant collection of all maximal biclusters of similar values for a given numerical dataset and similarity parameter $\theta$.
\end{proposition}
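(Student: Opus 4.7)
The plan is to verify the three claims in sequence, each reducing to Proposition~\ref{trimax-concepts} combined with the basic combinatorics of tolerance blocks. The overall strategy is: the inner FCA enumerator supplies every candidate dyadic concept in every block context; Proposition~\ref{trimax-concepts} characterises which candidates are truly maximal biclusters; and the two guards in {\sc TriMax} serialise the output so that at most one iteration prints a given $(A,B)$.

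For correctness, every printed $(A,B)$ is produced by the FCA subroutine on some block context $(G,M,Z_i)$, so it is already a dyadic concept there. In the first branch, $|\Phi'((A,B))|=1$ and Proposition~\ref{trimax-concepts} immediately yields maximality. In the second branch, the test $\nexists y\in[x,i[$ with $(A,B)\neq \Psi'_y(\Psi_y((A,B)))$ certifies the iff condition of the proposition; the endpoint $y=i$ may be safely omitted from the range since $(A,B)$ is trivially its own closure in the very context $i$ being processed.

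For completeness, take a maximal bicluster $(A,B)$ of similar values for the given $\theta$. The set $\{m(g)\mid g\in A,\, m\in B\}$ is pairwise $\simeq_\theta$-similar, hence a clique in the tolerance graph on $W$. Every such clique extends to at least one maximal clique, i.e.\ a tolerance block $c\in C$, so $A\times B\subseteq Z_c$. Maximality of $(A,B)$ as a bicluster forces it to be a formal concept of $(G,M,Z_c)$: any proper enlargement inside $Z_c$ would itself be a bicluster of pairwise-similar values, contradicting maximality. The FCA subroutine therefore discovers $(A,B)$ when iterating over the context associated to $c$, and setting $i=\max(\Phi'((A,B)))$, iteration $i$ reaches the printing guard and satisfies it by Proposition~\ref{trimax-concepts}.

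For non-redundancy, the same $(A,B)$ can appear as a dyadic concept in every context indexed by $U=\Phi'((A,B))$, but printing is gated by either $|U|=1$ or $\max(\Phi'((A,B)))=i$, and both conditions single out exactly the iteration $i=\max(U)$. The step I expect to require the most care is the completeness claim, specifically the assertion that the bicluster's values sit in \emph{some} tolerance block: the non-transitivity of $\simeq_\theta$ rules out a naive argument, and one must appeal to the maximal-clique extension together with the definition of tolerance blocks as maximal cliques to close the gap.
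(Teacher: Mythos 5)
Your proof is correct and follows the same overall decomposition as the paper (reduce correctness to Proposition~\ref{trimax-concepts}, and non-redundancy to the $\max(\Phi'((A,B)))=i$ gate), but it is genuinely more complete on one point: the paper disposes of completeness with the single sentence that it ``follows from Proposition~\ref{trimax-concepts}'', whereas that proposition only characterises which \emph{generated} dyadic concepts are maximal biclusters and says nothing about whether every maximal bicluster is generated. Your clique-extension argument --- the value set of a maximal bicluster is pairwise $\simeq_\theta$-similar, hence extends to a tolerance block $c$, and maximality then forces $(A,B)$ to be closed in $(G,M,Z_c)$ --- supplies exactly the missing half. One small point worth making explicit: the block $c$ you obtain by clique extension need not be $\max(\Phi'((A,B)))$, and the algorithm only prints at iteration $i=\max(\Phi'((A,B)))$ after verifying closedness in every $y\in[x,i[$; but your own maximality argument applies verbatim to \emph{every} block containing the value set (any failure of closedness in some $Z_y$ with $y\in\Phi'((A,B))$ would yield a strictly larger bicluster of similar values), so $(A,B)$ is a dyadic concept in each context indexed by $\Phi'((A,B))$, is therefore enumerated at iteration $\max(\Phi'((A,B)))$, and passes the inner test. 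With that sentence added, the completeness claim is fully closed, and your treatment is strictly more rigorous than the paper's.
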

\proof{
(i) and (ii) follow directly from Proposition \ref{trimax-concepts}. Statement (iii) is ensured by the second \textit{if} condition of the algorithm: a dyadic concept (or equivalently bicluster) is considered iff it has been extracted in the last dyadic context in which it holds. 
}

\section{Computer experiments}
\label{experiments}
In this section, we experiment with the algorithm {\sc TriMax} and highlight various aspects of its practical complexity. 

\medskip

\noindent\textbf{Data.} We explore a gene expression dataset of the species \textit{Laccaria bicolor} available at NCBI\footnote{http://www.ncbi.nlm.nih.gov/geo/ as series GSE9784}. More details on this dataset can be found in~\cite{KaytoueKND11}. This gene expression dataset monitors the behaviour of $11, 930$ genes in $12$ biological situations, reflecting various stages of \textit{Laccaria bicolor} biological cycle. Attribute values in $W$ vary between $0$ and $60, 000$.

\medskip

\noindent\textbf{{\sc TriMax} implementation}. {\sc TriMax} is written in C++. It uses the {\sc boost} library 1.42 for data structures and the implementation of {\sc InClose} from its authors\footnote{\url{http://sourceforge.net/projects/inclose/}} for dyadic concepts extraction.
At each iteration of the main loop, i.e. each tolerance block, the current scaled dyadic context is produced: We do not generated the whole triadic context which cannot fit into memory for large databases. It turns out that the modus computation for a given dyadic concept requires to compute scaling ``on the fly'', i.e. when computing the set of dyadic contexts in which a current concept holds. The experiments were carried out on an Intel CPU 2.54 Ghz machine with 8 GB RAM running under Ubuntu 11.04. 

\medskip

\begin{figure} 
\begin{tabular}{cc}
\includegraphics[scale=0.26]{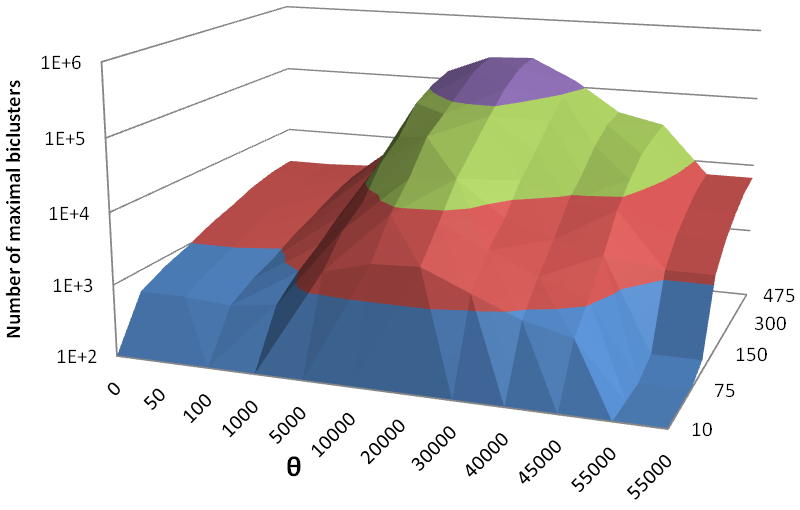}  & \includegraphics[scale=0.26]{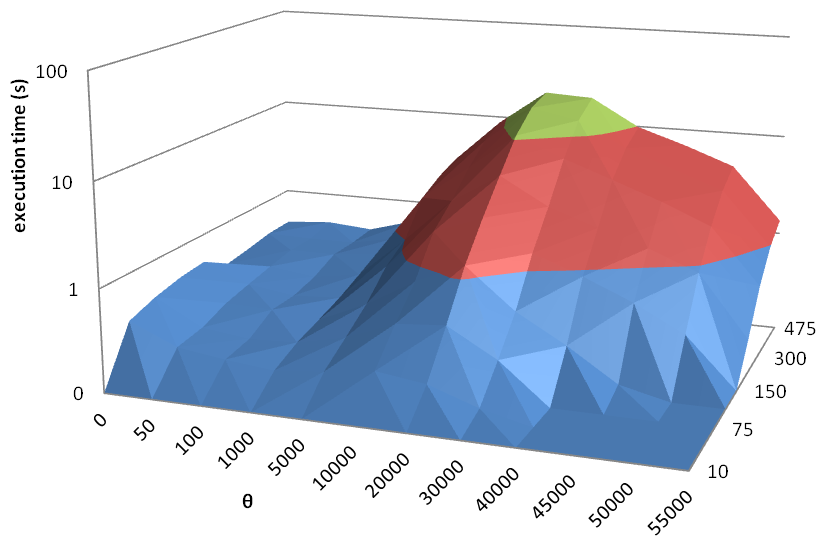}  \\ 
  ~& ~\\
(i) Numbers of patterns (Y-axis)  &  (ii) Execution times in seconds (Y-axis)  \\
  w.r.t. $\theta$ (X-axis) and $|G|$ (Z-axis) & w.r.t. $\theta$ (X-axis) and $|G|$ (Z-axis) \\
  ~& ~\\
\includegraphics[scale=0.26]{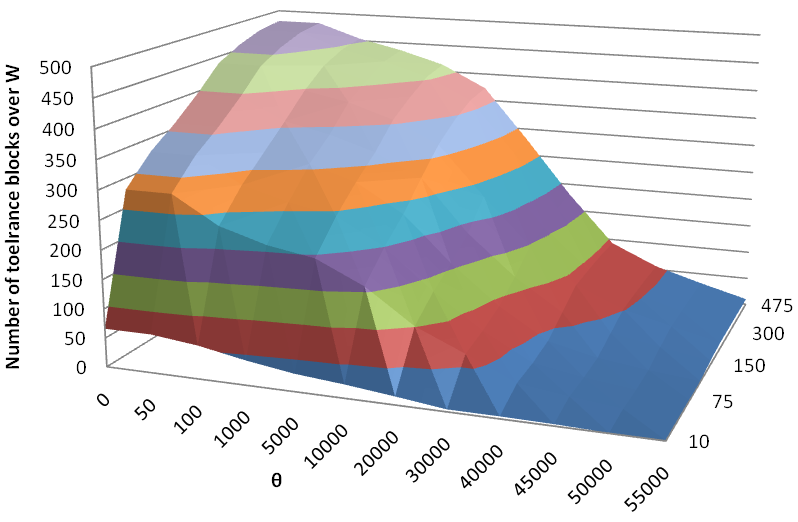} &\includegraphics[scale=0.26]{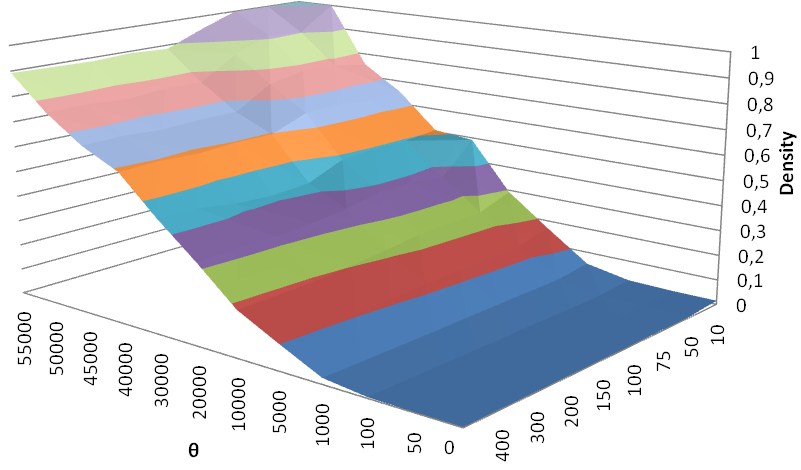}  \\ 
  ~& ~\\
(iii) Numbers of blocks of tolerance (Y-axis) & (iv) Density of triadic contexts (Y-axis) \\
  w.r.t. $\theta$ (X-axis) and $|G|$ (Z-axis) & w.r.t. $\theta$ (X-axis) and $|G|$ (Z-axis) \\
    ~& ~\\
\includegraphics[scale=0.32]{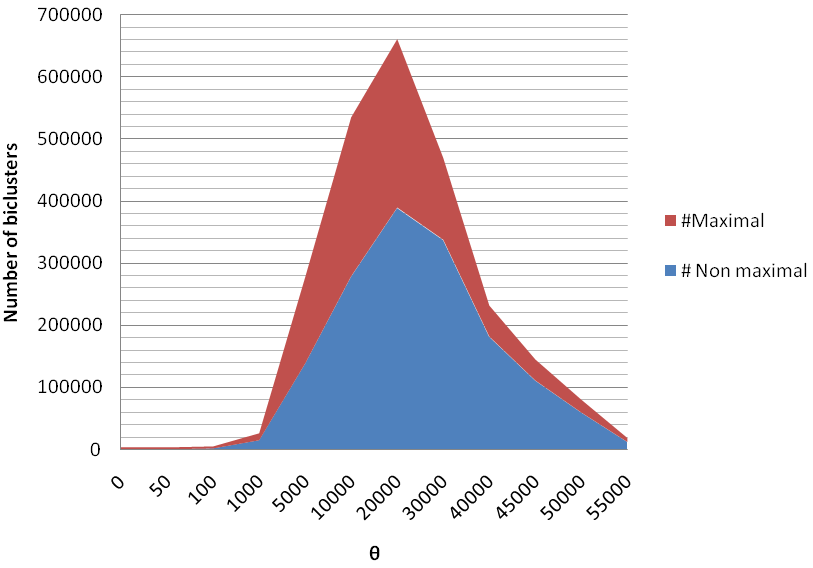} & \includegraphics[scale=0.29]{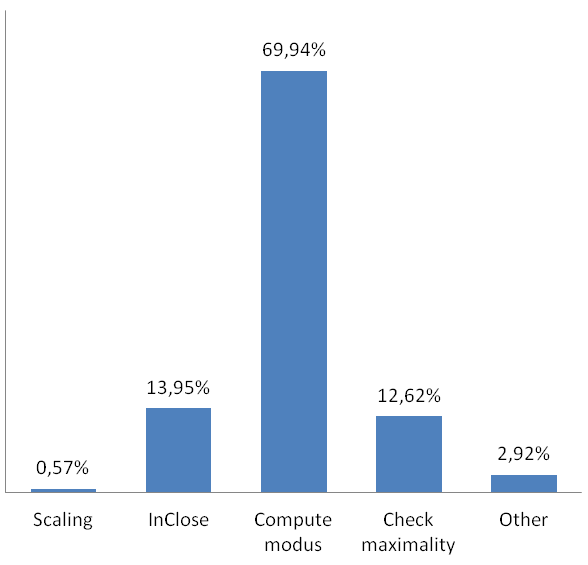}  \\ 
(v) Comparing the number of generated dyadic  & (vi) Repartition of execution time \\
concepts w.r.t. the actual number of maximal  &  w.r.t main steps of {\sc TriMax} \\
biclusters varying $\theta$ with $|G| = 500$  & with $\theta = 33, 000$ and $|G| = 500$ \\
  ~& ~\\
\end{tabular} 
\caption{Monitoring with different settings (i) the number of maximal biclusters, (ii) the execution times of {\sc TriMax}, (iii) the number of tolerance blocks, (iv) the derived triadic context density, (v) the number of non-maximal biclusters generated as dyadic-concepts w.r.t. the number of maximal biclusters, and (vi) repartition of execution time in the {\sc TriMax} algorithm.}
\label{figs}
\end{figure}
\noindent\textbf{Experiment settings.}
The goal of the present experiments is not to give a qualitative evaluation of the present approach (say biological interpretation), but rather a quantitative evaluation. Indeed, the present work aims at showing how an existing type of biclusters can be mined with Triadic Concept Analysis. For a qualitative evaluation, the reader may refer for example to~\cite{BessonRRB06,KaytoueKND11}.

Accordingly, we designed the following experiments to monitor various aspects of the {\sc TriMax} algorithm. For most of the experiments, the dataset used is composed of an increasing number of objects and all attributes. The objects are chosen randomly once and for all so that the different experiment results can be compared. We also vary the parameter $\theta$ in the same way across all experiments. Then, we monitor the following aspects, as presented in Figure~\ref{figs}:
\vspace{-0.2cm}
\begin{enumerate}[i.]
\item Number of maximal biclusters of similar values
\item Execution time (in seconds)
\item Number of tolerance blocks
\item Density of the triadic context, where density is defined as $d(G,M,C,Y) = |Y| / (|G| \times |M| \times |C|)$. This information is important, since contexts with high density are known to be hard to process with FCA algorithms~\cite{KuznetsovO02}, and we use the {\sc InClose} algorithm for dyadic contexts processing.
\item Comparison between the number of non-maximal biclusters produced by {\sc TriMax} (i.e. dyadic concepts that do not corresponds to maximal biclusters) with the number of maximal biclusters.
\item Execution time profiling of the main procedures of {\sc TriMax}. This is achieved with the tool {\sc GNU GProf} and gives us what parts of the algorithm are the most time consuming.
\end{enumerate}
 
\medskip

\noindent\textbf{Experiment results.} Figure~\ref{figs} presents the results of our experiments with different settings. In these settings, we vary the number of objects $|G|$ and the parameter $\theta$. A first observation arises from graph (i): the number of biclusters is the highest when $\theta \simeq 30, 000$. A first explanation is that $30, 000$ is the half of the maximal value of $W$ and almost all multiples of $100$ in $[0;60, 000]$ belongs to $W$. In graph (ii), execution time has the same behaviour as graph (i). These results can be understood by paying attention to the next graphs (iii) and (iv). In (iii) is monitored the number of tolerance blocks. The maximal number is reached when $\theta = 0$, i.e. $|C| = |W|$. When $\theta = max(W)$, we have $|C| = 1$. Now we observe in (iv) that the density follows a reverse behaviour: When $\theta = 0$, the density tends towards $0\%$; when $\theta = max(W)$, then density exactly equal $1\%$. Combining both graph (iii) and (iv), the worst cases happen when both density and tolerance bloc count are high.

Another observation, which explains also the execution times, arises from graph (v). Here are compared the number of maximal biclusters and the number of non-maximal biclusters generated as dyadic concepts. Here again, worst case is reached when $\theta \simeq 30, 000$. Looking at graph (vi), we learn that this is however not the major problem. The mostly consuming procedure of {\sc TriMax} is the computation of the modus of a dyadic concept. The explanation is that we compute modus with ``on the fly scaling''. 

Therefore, the bottleneck of our algorithm reveals itself to be the modus computation. In practical applications however, the analyst is not interested in all biclusters of similar values. Some constraints are generally defined, such as a minimal (resp. maximal) number of objects (resp. attributes) in a bicluster $(A,B)$, or a minimal  area $|A| \times |B|$, etc. Interestingly, most of those constraints can be evaluated on a generated dyadic concept. Therefore, before computing the modus of such concept, we can check such properties and discard the concept if not respecting the constraints. Although not reflected in this paper, we tested how adding minimal (resp. maximal) size constraints on a bicluster affects both number of biclusters and execution times. The results are very interesting: for example with $\theta = 33, 000$, $|G| = 500$, and minimal (resp. maximal) size for $|A|$ set to $10$ (resp. $40$), {\sc TriMax} produces only $5, 332$ maximal biclusters in $2.1$ seconds compared to $104, 226$ maximal biclusters extracted in $16.130$ seconds without any constraint.

Finally, the most interesting aspect of {\sc TriMax} is its direct distributed computation capacity. Indeed, each iteration, i.e. for each block of tolerance, can be achieved independently from the others. Furthermore, the core of {\sc TriMax} consisting in extracting dyadic contexts can also be distributed, see e.g.~\cite{Krajca09}. A deeper investigation remains to be done in this case. Note that although the method description involves $W$ as a set of natural numbers, {\sc TriMax} can directly handle numerical data real numbers, and has been implemented as such.

\medskip

\noindent\textbf{Comparison with existing methods.} Two existing methods in the literature also consider the problem of extracting all maximal biclusters of similar values from a numerical dataset. The first method is called \textit{Numerical Biset Miner} ({\sc NBS-Miner}~\cite{BessonRRB06}). The second method is based on \textit{interval pattern structures} ({\sc IPS}~\cite{KaytoueKN11,IJCAI11}). Limited by space, we do not detail these methods. Both {\sc NBS-Miner} and {\sc IPS} algorithms have been implemented in C++. 
First experiments show that {\sc NBS-Miner} is not scalable compared to {\sc IPS} and {\sc TriMax}. On another hand, it seems that {\sc TriMax} outperforms {\sc IPS}, but a deeper investigation is required. The main problem in {\sc IPS} is to find an efficient algorithm able to compute tolerance blocks over a set of intervals.

\section{Conclusion}
We addressed the problem of biclustering numerical data with Formal Concept Analysis. So called (maximal) biclusters of similar values can be characterized and extracted with Triadic Concept Analysis, which turns out to be a novel mathematical framework for this task. We properly defined a scaling procedure turning original numerical data into triadic contexts from which biclusters can be extracted as triadic concepts with existing algorithms. This approach allows a correct, complete and non-redundant extraction of all maximal biclusters, for any similarity parameter $\theta$ and can be extended to $n$-ary numerical datasets while their computation can be directly distributed. The interpretation of triadic concepts is very rich: both extent and intent allow to characterize a bicluster (i.e. the rectangle), while the modus gives the range of values of the biclusters, and for which $\theta$ is the bicluster maximal. Moreover, the larger the modus, the more similar the values within current bicluster. It follows a perspective of research, aiming at extracting the top-$k$ frequent tri-concepts with {\sc Data-Peeler}~\cite{CerfBRB09}, which can help to handle the problem of top-$k$ biclusters extraction. We also adapted the TCA machinery with algorithm {\sc TriMax} to extract maximal biclusters for a user-defined $\theta$, which is classical in the existing literature. It appears that {\sc TriMax} is a fully customizable algorithm: any concept extraction algorithm can be used inside its core (along with several constraints on produced dyadic concepts), while its distributed computation is direct. Among several other experiments, it remains now to determine which are the best core algorithms for a given $\theta$ parameter, the very last directly influencing derived contexts density.

\medskip

\noindent\textbf{Acknowledgements.}  Authors would like to thank Dmitry Andreevich Morozov for implementing the algorithms NBS-Miner and IPS. Mehdi Kaytoue was partially supported by CNPq,  Fapemig and the Brazilian National
Institute for Science and Technology for the Web (InWeb). Sergei O. Kuznetsov was supported by the project of the Russian Foundation for Basic Research, grant no. 08-07-92497-NTsNIL$\_$a. Juraj Macko acknowledges support by Grant No. \mbox{202/10/0262} of the Czech Science Foundation.

\appendix
\section{Proof of the Proposition 1.}
Before proving this proposition, we need to introduce the following.
For sake of simplicity, we now consider $W$ as the set of all natural numbers from a numerical dataset that are greater or equal than the minimal value and lower or equal than the maximal value, i.e. $ W = \{0,1,2,3,4,5,6,7,8,9\}$ with the example of Table~\ref{tab:data}.

\begin{definition}[Scale value and scale relation] 
We call \textit{scale value} $s=q-r$ where $r = min(W)$ and $q = max(W)$. The scale relation is a binary relation $J \subseteq W \times T$, where $T=\{t_1, \dots, t_{2s+1} \}$  $r \leq w \leq q$ and  $\tu{ w,t_i } \in J$ iff $i \in  [w-r+1,w-r+1+s]$.
\end{definition}

Note that  $J$ is equivalent to interordinal scale of $W$ previously given, but this notations are used for the proof.

\begin{definition}[$E_{\theta w}$ - cluster base] We introduce 
$E_{\theta w} \subseteq T$ defined as $E_{\theta w}=[t_{w+\theta-r+1};t_{w-r+1+s}]$ for given $\theta$ and $w \in W$.
\end{definition} 

\begin{example}[$E_{\theta w}$ - cluster base] 
$ E_{1 2}=[t_{2+1-0+1};t_{2-0+1+9}]=[t_{4};t_{12}]$.
\end{example} 

\begin{proposition}
$(w_b=m(g)) \simeq_{\theta}(n(h)=w_c)$  iff $(\tu{g,m} \in Y^{12}_{E_
{\theta b}}$ \mbox{ and } $\tu{h,n} \in Y^{12}_{E_{\theta b}})$.
\end{proposition}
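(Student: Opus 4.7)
The plan is to reduce both sides of the claimed equivalence to simple arithmetic inequalities on the values in $W$, and then match them. Unfolding the dyadic derivation operator for the third dimension fixed at $A_3 = E_{\theta b}$, one has $\langle g,m\rangle \in Y^{12}_{E_{\theta b}}$ iff for every $t_i \in E_{\theta b}$, $(g,m,t_i)\in Y$; and by the definition of the scaled context this amounts to $(m(g), t_i)\in J$ for every such $t_i$, i.e.\ a set-theoretic containment between two intervals of indices on $T$.

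The second step is to make this containment explicit. By the scale relation, the indices $i$ with $(m(g),t_i)\in J$ form the interval $[m(g)-r+1,\, m(g)-r+1+s]$, while $E_{\theta b}$ corresponds by definition to the indices in $[w_b+\theta-r+1,\, w_b-r+1+s]$. Containment of the latter in the former is equivalent to the two endpoint inequalities $m(g)-r+1 \le w_b+\theta-r+1$ and $w_b-r+1+s \le m(g)-r+1+s$, which after simplification give exactly $w_b \le m(g) \le w_b+\theta$. Thus $\langle g,m\rangle \in Y^{12}_{E_{\theta b}}$ iff $m(g)\in[w_b,w_b+\theta]$, and symmetrically $\langle h,n\rangle \in Y^{12}_{E_{\theta b}}$ iff $n(h)\in[w_b,w_b+\theta]$.

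Now I substitute $w_b = m(g)$, which is part of the hypothesis of the statement. The first condition becomes $m(g)\le m(g)\le m(g)+\theta$, which is trivially true. The second becomes $m(g)\le n(h)\le m(g)+\theta$, i.e.\ $0\le n(h)-m(g)\le \theta$, which under the (implicit, and natural up to relabelling) convention $w_b\le w_c$ coincides with $|m(g)-n(h)|\le \theta$, that is, $m(g)\simeq_\theta n(h)$. Chaining the equivalences in both directions yields the desired biconditional.

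The conceptually delicate point, and the one I would take care to flag, is the choice of base: because $E_{\theta b}$ is of the form $[w_b,w_b+\theta]$ and is not symmetric around $w_b$, the equivalence as written pins down $b$ as the index of the smaller of the two values $m(g), n(h)$. I would therefore either state the result under the standing convention $w_b\le w_c$, or observe that for any $w_b, w_c$ with $w_b\simeq_\theta w_c$ one can swap their roles so that the chosen base $w_b$ is the minimum, which is harmless since $\simeq_\theta$ is symmetric. The rest of the argument is routine arithmetic on scale indices and does not require any further derivation operator identities.
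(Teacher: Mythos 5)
Your proof is correct and follows essentially the same route as the paper's: unfold $Y^{12}_{E_{\theta b}}$ into a containment of index intervals on $T$ (the paper writes this as $E_c \supseteq E_{\theta b}$), reduce the containment to the endpoint inequalities $w_b \le w_c \le w_b + \theta$, and identify this with $\simeq_\theta$ under the convention $w_c \ge w_b$, which the paper also assumes explicitly in the first line of its proof. Your remark that the asymmetry of $E_{\theta b}$ forces $w_b$ to be the smaller value is a valid point of care that the paper handles only by that standing assumption.
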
 

\begin{proof}
Let $E_b,E_c \subseteq T$ and $w_c \geq w_b$. According to the definition $(g,m)\in Y^{12}_{E_{\theta b}}$ 
iff $ m,g,t$ are related by $Y$  for all $ t \in E_{\theta b} $. Using scaling and definition we have $[t_{w_b-r+1};t_{w_b-r+1+s}]=E_b\supseteq E_{\theta b}=[t_{w_b+\theta-r+1};t_{w_b-r+1+s}]$ which is straightforward. We just need to show that $(h,n)\in Y^{12}_{E_{\theta b}}$ holds as well. With scaling definition and previous definition we get $[t_{w_c-r+1};t_{w_c-r+1+s}]=E_c\supseteq E_{\theta b}=[t_{w_b+\theta-r+1};t_{w_b-r+1+s}]$ holding iff  $w_c-w_b \leq \theta$, which is equal to the definition of $\simeq_{\theta}$.
\end{proof}
Moreover we can easily see as a corollary  that  $w_c-w_b \leq \theta$ holds iff $E_b \cap E_c \supseteq E_{\theta b}$ and  $w_c-w_b = \theta$ holds iff $E_b \cap E_c = E_{\theta b}$.
Now we can prove the Proposition 1 from the main text.

\setcounter{proposition}{0}

\begin{proposition}
Tuple $ \tu{A_1,A_2,U} $, where $A_1 \subseteq G$, $A_2 \subseteq M$ and $U \subseteq T$ is triadic concept iff $(A_1,A_2)$ is a maximal bicluster of similar values for some $\theta \geq 0$. Furthermore the value of $\theta$ is defined as $\theta=s-|U|+1$. 
\end{proposition}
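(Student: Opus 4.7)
The whole proof hinges on the following ``interval arithmetic'' fact that is already implicit in the preliminary corollary of the Appendix: for every value $w \in W$, the set $E_w = \{t \in T \mid (w,t)\in J\}$ is an interval of length $s+1$ in $T$, and for any non-empty $V \subseteq W$ with minimum $w_b$ and maximum $w_c$ one has $\bigcap_{w \in V} E_w = E_{(w_c - w_b)\, b}$, an interval of length $s - (w_c - w_b) + 1$. I will use this identity as a dictionary between the modus $U$ of a triadic concept and the range $d := \max V - \min V$ of the corresponding bicluster's values, and prove the two implications separately.

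\textbf{Direction $(\Rightarrow)$.} Assume $\langle A_1,A_2,U\rangle$ is a triadic concept and put $V = \{m(g) \mid g \in A_1,\, m \in A_2\}$, $d = \max V - \min V$. From $A_1\times A_2\times U \subseteq Y$ I obtain $U \subseteq E_{m(g)}$ for every $(g,m)\in A_1\times A_2$, whence $U \subseteq \bigcap_{w\in V} E_w$; component-wise maximality of the third coordinate of a triadic concept gives the reverse inclusion, so $U = \bigcap_{w\in V} E_w$. By the corollary $|U| = s-d+1$, so setting $\theta := s - |U| + 1 = d$ makes $(A_1,A_2)$ a bicluster of similar values for $\theta$. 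For maximality I argue contrapositively: if $g' \in G\setminus A_1$ could be added while preserving $\theta$-similarity, then for every $m\in A_2$, $|m(g') - w| \le d$ for all $w\in V$, so $m(g') \in [\min V,\max V]$, hence $E_{m(g')} \supseteq U$ and $\{g'\}\times A_2 \times U \subseteq Y$, contradicting the maximality of $A_1$ inside $\langle A_1,A_2,U\rangle$. A symmetric argument rules out adding an attribute.

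\textbf{Direction $(\Leftarrow)$.} Let $(A_1,A_2)$ be a maximal bicluster of similar values for some $\theta \ge 0$, and let $d$ be the actual range of its values, so $d \le \theta$. Any extension of $(A_1,A_2)$ with range $\le d$ is also an extension with range $\le \theta$, so maximality for $\theta$ forces maximality for $d$; hence without loss of generality $\theta = d$. Define $U := \bigcap_{w \in V} E_w$, which by the corollary has size $s - d + 1 = s - \theta + 1$, matching the ``furthermore'' formula. By construction $A_1\times A_2\times U \subseteq Y$, and $U$ is maximal in the third coordinate because any $t' \in T\setminus U$ lies outside some $E_{m(g)}$. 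Maximality in the first coordinate is verified as in the forward direction: adding a hypothetical $g'$ with $\{g'\}\times A_2\times U\subseteq Y$ forces $m(g') \in [\min V,\max V]$ for every $m\in A_2$, giving an extension of the bicluster with range $\le d$ and contradicting its maximality. The second coordinate is handled symmetrically.

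\textbf{Main obstacle.} The delicate step is not the forward direction but the backward one: translating ``maximal bicluster for some $\theta$'' into the precise modus size $s-|U|+1$ requires first aligning $\theta$ with the true range $d$ of the bicluster. Without the monotonicity remark (maximality for the larger $\theta$ entails maximality for $d$) one could be stuck with a $\theta$ strictly larger than $d$, for which no triadic concept with modus of size $s-\theta+1$ exists. Once this reduction is made, the one-to-one correspondence $U \leftrightarrow \bigcap_{w\in V} E_w$ and the length identity $|U| = s-d+1$ close both directions uniformly.
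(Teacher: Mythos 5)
Your proof is correct and follows essentially the same route as the paper's: both rest on the interordinal-scale fact that the modus of a triadic concept is exactly $\bigcap_{w\in V} E_w$ for $V$ the set of values covered by $(A_1,A_2)$, an interval of size $s-(\max V-\min V)+1$, which yields the formula $\theta=s-|U|+1$. Your explicit reduction in the backward direction from an arbitrary admissible $\theta$ to the actual value range $d$ (via the monotonicity of maximality) is a point the paper's own argument glosses over, and you handle it correctly.
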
 

\begin{proof}
Let $U=E_{\theta b}$ and consider dyadic context $ Y^{12}_{U}= Y^{12}_{E_{\theta b}}$ for some $w_b$. Using dyadic closure operator $ \Psi^{'}(\Psi((A_1))$ we get  $(A_1,A_2)$. From definition of triconcept we know that $A_1 \subseteq B_1$ implies $A_1=B_1$ (the same for $A_2$). From definition of maximal bicluster of similar values we know that $\tu{A_1,A_2}$ is maximal when it does not exists $\tu {B_1,B_2}$ s.t. $B_1 \supseteq A_1$ (the same applies for $A_2$). It is obvious that both sets are maximal from definition and when we have the same dyadic context  $ Y^{12}_{U}= Y^{12}_{E_{\theta b}}$. Now we need to look at dyadic context  $ Y^{12}_{U}= Y^{12}_{E_{\theta b}}$. In $|U|=|E_{\theta b}|=|[t_{w_b+\theta-r+1};t_{w_b-r+1+s}]|$ we can easily see  that $|U|=s-\theta+1$, which gives $\theta=s-|U|+1$.  

Finally, $U$ is maximal (as being modus of a triconcept) and $E_{\theta b}$ is maximal as well because  $w_c-w_b \leq \theta$ holds iff $E_b \cap E_c \supseteq E_{\theta b}$. All facts mentioned in this proof leads to equality of the triconcept and maximal bicluster of similar values. 
\end{proof}
 
\bibliographystyle{splncs03}
\bibliography{CLA-11}{}
\end{document}